\documentclass[pra,aps,twocolumn,superscriptaddress,longbibliography]{revtex4-1}
\usepackage[colorlinks=true, citecolor=red, urlcolor=blue ]{hyperref}
\usepackage{graphicx}
\usepackage{bm}
\usepackage{amsmath,amsfonts}
\usepackage{amsthm}
\usepackage{hyperref}
\usepackage{xcolor}
\hypersetup{
    urlcolor=blue,
    citecolor=blue
           }

\usepackage{derivative}
\usepackage{braket}
\theoremstyle{plain}
\usepackage{color}
\usepackage{amssymb}
\usepackage{amsthm}
\usepackage{amsfonts}
\usepackage{float}
\usepackage{tabularx}
\usepackage{graphicx}
\usepackage[utf8]{inputenc}

\usepackage{mathtools}
\usepackage{esvect}
\usepackage{wrapfig}
\usepackage{amsthm}
\usepackage{verbatim}
\usepackage{bbm}
\usepackage[normalem]{ulem}

\usepackage{enumitem}
\usepackage{fmtcount}
\usepackage{booktabs}
\usepackage{csquotes}
\usepackage{epsfig}

\usepackage{tabularx}
\usepackage{graphicx}
\usepackage{amsmath}
\usepackage{braket}
\usepackage{latexsym}
\usepackage{bm}
\usepackage{graphics,epstopdf}
\usepackage{enumitem}
\usepackage{fmtcount}
\usepackage{booktabs}
\usepackage{csquotes}
\usepackage{epsfig}

\theoremstyle{plain}

\def\bea{\begin{eqnarray}}
\def\eea{\end{eqnarray}}
\def\ba{\begin{array}}
\def\ea{\end{array}}
\def\n{\nonumber}
\def\c{\mathscr}
\def\beq{\begin{equation}}
\def\eeq{\end{equation}}

\usepackage[normalem]{ulem}
\usepackage{float}
\usepackage{graphicx}  
\usepackage{dcolumn}          
\usepackage{amssymb}
\usepackage{appendix}
\usepackage{physics}   
\usepackage{mathtools}
\usepackage{esvect}
\usepackage{wrapfig}
\usepackage{amsthm}
\usepackage{verbatim}
\usepackage{bbm}
\usepackage{soul}

\usepackage[mathscr]{euscript}
\def\Tr{\operatorname{Tr}}
\def\Pr{\operatorname{Pr}}
\def\sq{\operatorname{sq}}
\def\SEP{\operatorname{SEP}}
\def\FS{\operatorname{FS}}
\def\GE{\operatorname{GE}}
\def\E{\operatorname{E}}
\def\BS{\operatorname{BS}}
\def\Ent{\operatorname{Ent}}
\def\PPT{\operatorname{PPT}}
\def\supp{\operatorname{supp}}
\def\LOCC{\operatorname{LOCC}}
\def\GHZ{\operatorname{GHZ}}
\def\T{\operatorname{T}}

\def\id{\operatorname{id}}

\def\({\left(}
\def\){\right)}
\def\[{\left[}
\def\]{\right]}

\let\oldemptyset\emptyset
\let\emptyset\varnothing
\def\bound{\boldsymbol{\mathsf{L}}}
\newcommand{\mc}[1]{\mathcal{#1}}
\newcommand{\wt}[1]{\widetilde{#1}}
\newcommand{\wh}[1]{\widehat{#1}}
\newcommand{\tf}[1]{\textbf{#1}}
\newcommand{\tn}[1]{\textnormal{#1}}
\newcommand{\msc}[1]{\mathscr{#1}}
\newcommand{\bbm}[1]{\mathbbm{#1}}

\renewcommand{\qedsymbol}{$\blacksquare$}

\renewcommand{\t}{{\scriptscriptstyle\mathsf{T}}}
\DeclarePairedDelimiter{\ceil}{\lceil}{\rceil}
\newtheorem{theorem}{Theorem}

\newtheorem{remark}{Remark}

\renewcommand{\arraystretch}{1.3}


\begin{document}

\title{Non-positive measurements aren't beneficial in quantum metrology for unitary encoding, but can be for open schemes}





\author{Paranjoy Chaki}
\affiliation{Harish-Chandra Research Institute,  A CI of Homi Bhabha National Institute, Chhatnag Road, Jhunsi, Allahabad 211 019, India}

\author{Debarupa Saha}
\affiliation{Harish-Chandra Research Institute,  A CI of Homi Bhabha National Institute, Chhatnag Road, Jhunsi, Allahabad 211 019, India}

\author{Kornikar Sen}
\affiliation{Departamento de Física Teórica, Universidad Complutense, 28040 Madrid, Spain}

\author{Ujjwal Sen}
\affiliation{Harish-Chandra Research Institute,  A CI of Homi Bhabha National Institute, Chhatnag Road, Jhunsi, Allahabad 211 019, India}

\begin{abstract}
We investigate whether {non-positive operator-valued measurements} can be beneficial for quantum metrology.
For unitary encoding, we 
show that non-positive measurements offer no advantage over positive ones. 
Going over to open encoding, we find, however, that non-positive measurements can be advantageous for certain cases, while it may mirror the unitary case - no advantage over positive measurements - for others. 
For arbitrary open-system encoding, we identify a sufficient condition under which positive measurements suffice to achieve the best precision, 
and 
more resource-intensive non-positive measurements offer no extra benefit.

\end{abstract}
\maketitle

\section{Introduction}

Quantum metrology~\cite{QM,QM2}, the science of estimation, plays a foundational role in quantum information theory. It ensures that physical quantities intrinsic to a quantum system are measured accurately, which in turn is essential for quantum communication~\cite{qq_c,qq_c11,com_m1,com_m2}, quantum cryptography~\cite{cryp_1,cryp_2}, and sensing~\cite{QS,QS2}. By carefully selecting the measurement setting, one can minimize the error in parameter estimation and approach the theoretical lower bound, known as the {Cramér–Rao bound}~\cite{CM1,CM2,CM3}. The sensitivity of initial probe systems in parameter estimation is analyzed in Refs.~\cite{prob_1,prob_3,prob_5}. Precision in estimation can often be enhanced by exploiting quantum properties such as entanglement~\cite{Entx}, coherence~\cite{Coh}, non-Markovianity~\cite{non-Mar}, and squeezing~\cite{q_squ}, as demonstrated in Refs.~\cite{Ep5,Ep6,Ep7,Ep8,Ep9,En1,En2,En3,En4,En5,En6,NM1,NM2,NNMMf,NNMM1,NNMM2,sup_1,sq1,sq2,squ1}. These properties help enhance precision beyond classical limits, referred to as the standard quantum limit, ultimately reaching the Heisenberg limit~\cite{HL1,HL11,HL2,HL3,HL4}.  Apart from this, the application of quantum metrology in many-body systems is explored in Refs.~\cite{E1,E3,E14,E5,E4,E6,E7,E8}.


In quantum metrology, there are two types of parameter encoding processes, namely unitary encoding~\cite{prob_2} and open encoding~\cite{open_X1,open_x2,open_x3,open_x4,open_x5,open_1,suva_2}. After the parameter is encoded into the probe system, a measurement is typically performed to estimate the parameter. The measurement essentially {acts to decode} the information about the encoded parameter. Various measurement strategies have been employed for parameter estimation in the context of quantum metrology, such as imperfect and weak measurements~\cite{un_sharp,weak_1,weak_2,weak_3}, sequential measurements~\cite{seq_3,seq_2,seq_1}, incompatible measurements~\cite{inc_measure1}, local measurements~\cite{loc_measurement}, and random measurements~\cite{measurement_random}. To the best of our knowledge, all these types of measurement schemes fall under the category of positive operator-valued measurements (POVMs). A standard method for implementing a POVM on a system involves introducing an auxiliary system that is initially uncorrelated with the probe system, performing a global projective measurement on the combined system and auxiliary, and then tracing out the auxiliary system. However, the measurement setting can be generalized by considering the initial probe–auxiliary state that are possibly correlated. By correlating the system and the auxiliary, one can realize a new type of measurement setting beyond POVMs. Such measurements are known as non-positive operator-valued measurements (NPOVMs)~\cite{N2,N3}. The non-positivity of quantum measurements and quantum channels has been shown to provide advantages in several areas of quantum technologies, one prominent example being energy-extraction protocols for quantum batteries~\cite{N1,N2}. 

  Motivated by the advantages of non-positiveness in measurements over POVM ones~\cite{N2,N3}, we raise the natural question: can non-positivity in measurement strategies offer enhanced precision in quantum metrology? Addressing this question not only extends the theoretical framework of quantum measurements but also highlights the potential of NPOVMs to provide a new advantage in quantum metrology, thereby broadening their relevance within quantum technologies.

To explore this, we consider both unitary and open encoding strategies in quantum metrology. In the case of arbitrary unitary encoding processes, we analytically show that standard POVMs are sufficient to achieve the optimal precision, even within the broader class of measurement settings, general measurements that include both positive and non-positive measurement (decoding) strategies. Here the optimization is performed over the initial state and measurement settings.This ensures that, for arbitrary unitary encodings, non-positiveness in measurement strategies provides no extra advantage.

However, in the case of open encoding, we encounter two distinct scenarios. In one scenario, positive measurement and general measurement schemes yield the same optimal precision; in the other, NPOVM measurement schemes outperform POVM reflecting better performance than the measurement direction corresponding to the eigenbasis of the symmetric logarithmic derivative operator. To further clarify the distinction between these two cases, we provide a sufficient condition that determines when, for a given open encoding process realized by a two-qubit global unitary and a single-qubit environment, positive measurement strategies suffice to achieve the same precision for a suitable choice of initial probe system as that achieved by any given general measurement for any given initial probe auxiliary system. Here, the probe system is also considered to be a single qubit system. From here it also follows that if we know the optimal initial state of the probe and auxiliary system corresponding to the general measurement, then this condition also serves as the sufficient condition for achieving the best precision by POVM over the general measurements. We also provide some examples, such as estimating the noise strengths of bit-flip and dephasing channels and estimating an arbitrary parameter encoded via $XX$ interactions between the environment and the auxiliary system.

Conversely, we also present explicit, physically motivated examples where the benefits
of a non-positive measurement scheme is depicted. In particular, we consider the transverse-field XY (TXY) model and analyze three scenarios in which different parameters of the model are estimated, namely, field strength, interaction strength, and anisotropy parameter. In each of these cases, NPOVMs are shown to provide better precision compared to standard POVMs.

The rest of the paper is organized as follows. In Sec.~\ref{premea}, we provide a brief discussion on both positive and general quantum measurement settings for non-positive measurements. Sec.~\ref{para} contains a prerequisite discussion on the processes underlying quantum metrology. In Sec.~\ref{3}, we present the case of unitary encoding and show that positive measurement strategies suffice to achieve the optimal precision. In Sec.~\ref{4}, we turn to the case of open encoding and derive a sufficient condition that characterizes when POVM measurements remain optimal. This is followed by numerical examples, which support the derived condition. In Sec.~\ref{4Bx}, we present examples demonstrating that when the sufficient condition is not satisfied, one can unlock the advantage of NPOVM measurement settings in quantum metrology. And lastly, we conclude in Sec.~\ref{5}.


\section{preliminaries}\label{2}

In this section, we present a brief overview of POVMs, physically realizable NPOVM operations, and quantum parameter estimation theory under both POVM and NPOVM decoding strategies. The discussion on POVM and NPOVM measurement settings is presented below.

\subsection{Brief discussion on POVMs and general quantum measurements}
\label{premea}

Consider a system $S$ of dimension $d_S$ and an auxiliary $A$ of dimension $d_A$. Suppose the initial joint state of $S$ and $A$ is a product state, i.e., $\rho_S \otimes \rho_A$, where $\rho_S$ and $\rho_A$ denote the states of the system and the environment, defined in Hilbert spaces $\mathcal{H}^S$ and $\mathcal{H}^A$, respectively. If a projective measurement is performed on $\rho_S \otimes \rho_A$ using an element $Q_i$ from a set of orthogonal projectors $\{Q_i\}$, where each element of the set acts on the joint Hilbert space $\mathcal{H}^S \otimes \mathcal{H}^A$ and satisfies $Q_i Q_j = \delta_{ij}$ for all $i,j$, and $\sum_i Q_i = \mathbb{I}_d$ with $d = d_S d_A$. Note here and throughout the paper, we use the notation $\mathbb{I}_d$ to denote the identity operator on the $d$-dimensional space. The reduced state of the system corresponding to each projection outcome $i$ in such a scenario is given as
$$
\rho_{S}^i = \frac{\mathrm{Tr}_A\bigl[Q_i \,(\rho_S \otimes \rho_A)\, Q_i\bigr]}{\mathrm{Tr}\bigl[Q_i \,(\rho_S \otimes \rho_A)\bigr]}=\frac{\chi_i \rho_{S} \chi_i^{\dag}}{\tr[ \rho_{S} \chi_i^{\dag}\chi_i]}.
$$

Here, $\chi_i$ denotes the effective measurement operator acting on the system state $\rho_{S}$, and $\tr_{A}$ refers to the partial trace over $A$. Such a measurement, implemented on the subsystem $S$ by performing a projective measurement on the composite system $\rho_S \otimes \rho_A$, is referred to as a \emph{positive operator-valued measurement} (POVM). The term ``positive'' in POVM refers to the fact that the probability $p_i$ of obtaining a specific outcome $\rho^i_{S}$ can be written as
$ p_i = \tr[\rho_{S} \chi_i^{\dag} \chi_i] = \tr[\rho_{S} E_i], $
where each $E_i = \chi_i^{\dag} \chi_i$ is positive semidefinite and is referred to as a POVM element corresponding to the outcome labeled by $i$.  

The properties that any set of POVM elements $\{E_i\}$ must satisfy are: \\ 
1. Each $E_i$ is Hermitian, i.e., $E_i = E_i^{\dag}$, $\forall i$,  \\
2. The eigenvalues of each $E_i$ are non-negative,  \\
3. The completeness relation holds, i.e., $\sum_i E_i = \mathbb{I}_{d_S}$.  

Such a POVM measurement on the system $S$ reduces to a projective measurement when the projective measurement $\{Q_i\}$ performed on the composite system $\rho_S \otimes \rho_A$ forms a product basis. In this case, the POVM elements $\{E_i\}$ additionally satisfy  
$ E_i E_j = \delta_{ij} E_i, \quad \forall i,j. $
For projective measurement the auxiliary system $B$ becomes irrelevant.

Note that the very idea of POVM measurements relies on the implicit assumption that the initial composite state of the system and the auxiliary is a product state. However, one can also prepare initial composite states $\rho_{SA}$ of the system and the auxiliary that may be correlated, and then perform a projective measurement $\{Q_i\}$ on the joint system. In such a scenario, the final state of the system $S$ corresponding to a measurement outcome labeled ``i" becomes.
\begin{equation*}
\tilde{\rho}^{i}_{S} = \frac{\Tr_{A}\left[Q_i \rho_{SA} Q_i\right]}{\Tr\left[Q_i \rho_{SA}\right]}.
\end{equation*}

Such a measurement strategy is called a \emph{general quantum measurement}, where, unlike in the POVM case, it is often not possible to use positive semidefinite operators of the form $E_i$ to indicate the likelihood of obtaining a specific outcome $i$, mainly because the initial state $\rho_{SA}$ may be correlated. Nevertheless, one can still regard general quantum measurements as effective measurements on $S$ that arise from performing a projective measurement on the (possibly correlated) composite state $\rho_{SA}$ and then discarding the auxiliary. It is important to note that general quantum measurements form a superset of POVM measurements, i.e., they include both POVM measurements as well as those for which effective positive semidefinite operators of the form $E_i$ cannot be obtained, and are therefore referred to as non-positive operator-valued measurements (NPOVMs).
 

Having discussed the basics of both POVM and general quantum measurement setting we now move on to the topic of quantum parameter estimation and discuss how the aforementioned two measurement schemes can be employed to infer the parameter. The discussion is presented below.

\subsection{Quantum parameter estimation}
\label{para}
Suppose we want to estimate a parameter, $\theta$, which could be a component of the system's Hamiltonian~\cite{E3,E5}, the transition frequency of atomic clocks~\cite{atomic,atom_ap}, or a quantity defining a quantum channel~\cite{ch_11,ch_1}. Irrespective of the nature of $\theta$, its estimation in quantum metrology is typically performed through a two-step process. First, the parameter $\theta$ is encoded onto a probe, $\rho$ which represents the density matrix of a quantum system of dimension $d_S$, via a physical process which are either through a unitary or an open-system encoding method.  The choice of encoding depends on the nature of the parameter to be estimated. After the encoding process the encoded state of the probe becomes $\rho_{\theta}$.

In the second step, a measurement is performed on the encoded probe system, and $\theta$ is estimated based on the measurement outcomes using a suitable estimator. The commonly used estimator is an unbiased estimator, $\mathcal{K}(i)$, that satisfies the condition.
\begin{equation}
    \langle\mathcal{K}(i)\rangle_\theta = \int di \, p(i|\theta)\mathcal{K}(i) = \theta.
\end{equation}
Here, $p(i|\theta)$ denotes the probability of obtaining a particular outcome labeled $i$ when measuring the encoded state $\rho_{\theta}$. For a given parameter $\theta$, an estimator is considered unbiased if its average over the measurement outcomes $i$ equals the true value of $\theta$.

The primary objective is to estimate the parameter with the smallest possible error.


The spread (standard deviation) in estimating the parameter $\theta$ reflects the estimation error, thus the goal is to minimize this error by selecting the optimal estimator, probe and measurement, that gives the smallest possible standard deviation. For a particular probe state and measurement setting, there is a fundamental lower bound on the standard deviation of the parameter $\theta$ obtained by optimizing over all possible unbiased estimators. For single-shot measurement such a bound is known as the \textbf{Cramér-Rao bound}~\cite{CM1,CM2,CM3}, given by:
\begin{equation}
   \Delta\theta\geq\frac{1}{\sqrt{\mathcal{F}_{\rho_S}^{U/O}(\theta)}}.
\end{equation}
Here, $\rho_S$ in the suffix represent the initial input state and $\mathcal{F}_{\rho_S}(\theta)$ is the Fisher information corresponding to the measurement outcome $p(i|\theta)$ extracted from the encoded probe state $\rho_{S}(\theta)$ by performing  measurement, and is given by

\begin{equation}\label{3p}
\mathcal{F}_{\rho_S}(\theta)=\int di~p(i|\theta)  \left[\pdv{\log( p(i|\theta))}{\theta}\right]^2.
\end{equation}
\begin{figure*}
		\centering
            \includegraphics[width=8cm]{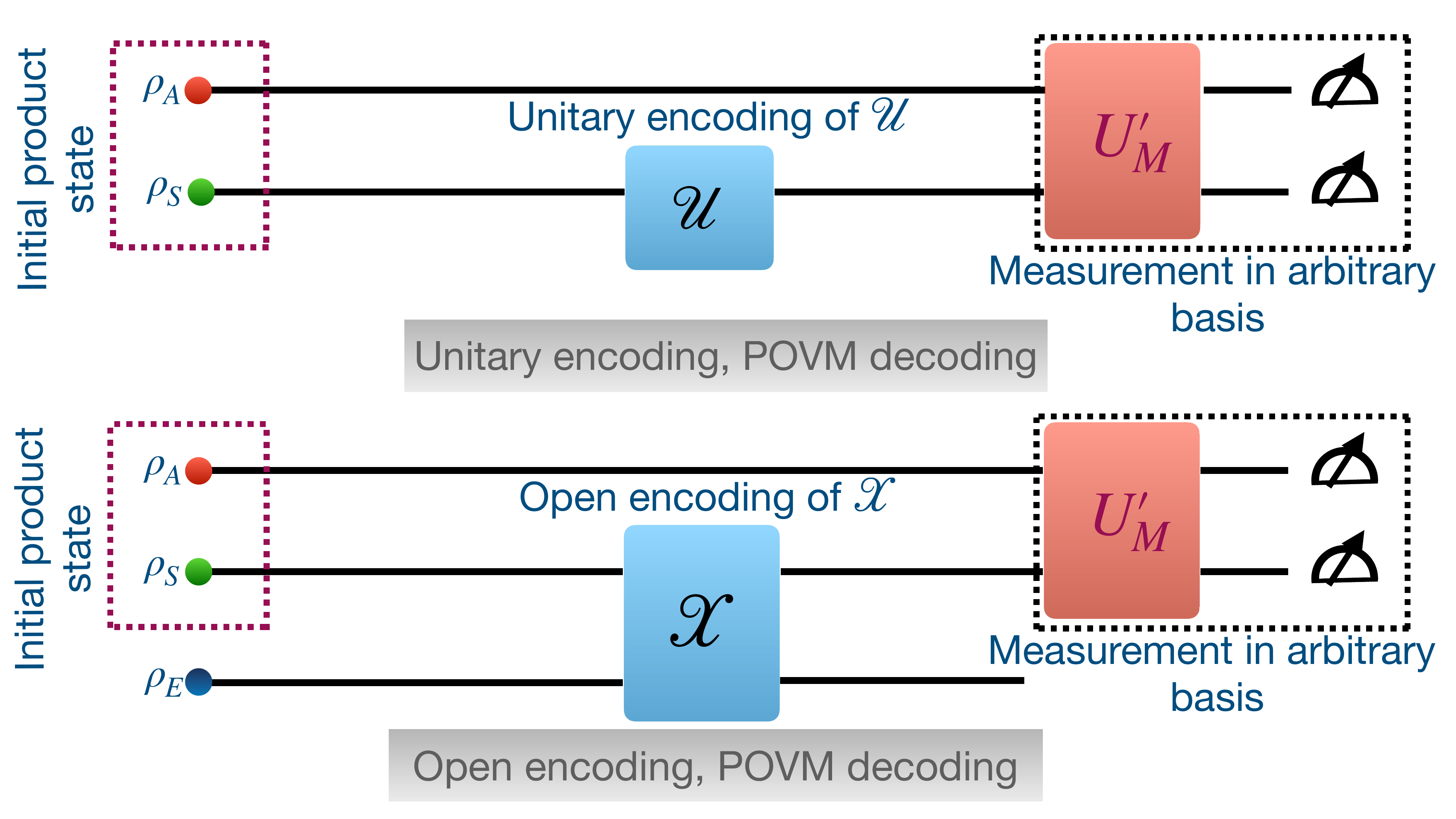}
            \hspace{1cm}
            \includegraphics[width=8cm]{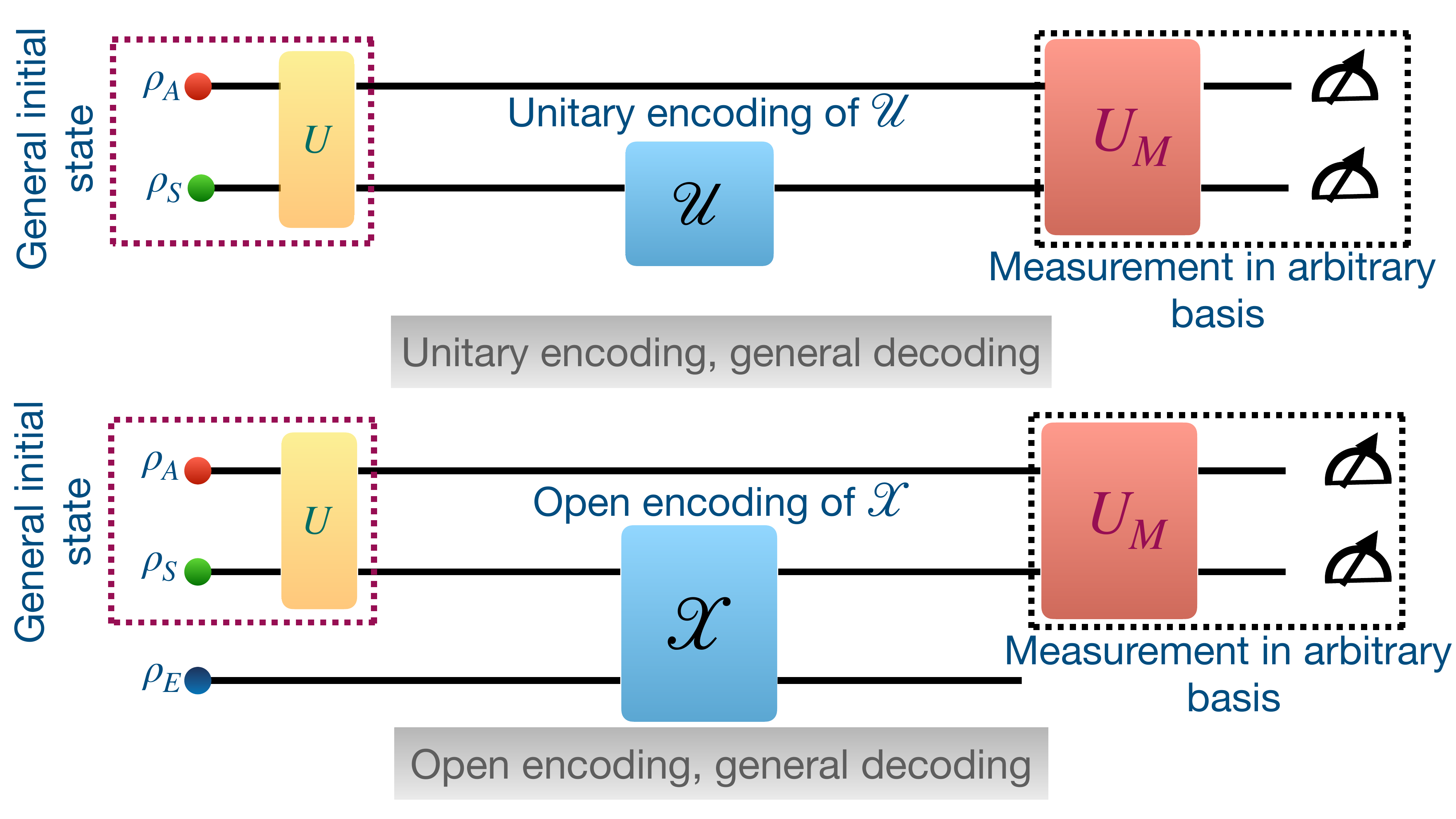}
\caption{{\textbf{POVMs and general measurements in quantum metrology.}} The figure depicts a schematic of POVMs and generalized quantum operations for parameter estimation under both unitary and open encoding. The probe $\rho_S$, auxiliary $\rho_A$, and environment $\rho_E$ are represented by green, red, and navy blue dots, respectively, while unitary (both local and global) used for encoding and measurement operations are indicated by yellow, blue, and red boxes. The left panel shows POVM decoding: the upper half corresponds to unitary encoding with $\mathcal{U}$, and the lower half to open encoding with $\mathcal{X}$, where the probe $\rho_S$ and environment $\rho_E$ are initially uncorrelated, and $\mathcal{X}$ involves the application of a global unitary. In both cases, a projective measurement is applied on the probe and auxiliary, with the global unitary $U_M'$ and $U_M$ fixing the measurement basis corresponding to POVMs and general measurements. The right panel shows the general decoding strategy. Here, an initial global unitary $U$ might correlate $\rho_A$ and $\rho_S$ before encoding by $\mathcal{U}$ or $\mathcal{X}$. The key distinction between the left and right panels is that in the general case the auxiliary and probe may start correlated, and optimizing the QFI over all general measurements requires optimization over  $\rho_A$, $U$, and $U_M$.  
}
		\label{fig3x} 
		\end{figure*} 

As seen from Eq.~\eqref{3p}, the Fisher information $\mathcal{F}(\theta)$ is the one obtained by optimizing over all possible unbiased estimator and therefore it does not depend on the estimator function $\mathcal{K}(i)$. For a particular initial state $\rho_S$, this Fisher information can further be optimized over measurement settings. Usually optimization over POVM measurement setting, is considered for this purpose. We refer to such a measurement strategy as the POVM decoding. Such a maximization gives the quantum Fisher information defined as
\begin{equation*}
    \mathcal{F}_{Q|\rho_S}(\theta) = \max_{X \in \Lambda^P} \mathcal{F}_{\rho_S}(\theta).
\end{equation*}
Here, ``$Q|\rho_S$'' in the subscript of $\mathcal{F}_{Q|\rho_S}(\theta)$ denotes the quantum Fisher information corresponding to the initial state $\rho_S$. $X$ is a particular choice of measurement setting and $\Lambda^P$ denotes the set of all possible POVM measurements defined in the Hilbert space $\mathcal{H}^d$.
Thus we have $ \mathcal{F}_{Q|\rho_S}(\theta) \geq \mathcal{F}_{\rho_S}(\theta)$.
 
Note that the discussion so far holds for any type of encoding, including both unitary and open encoding. However, to distinguish between the types of encoding considered, we denote the Fisher information $\mathcal{F}_{\rho_S}(\theta)$ as $\mathcal{F}_{\rho_S}^{U/O}(\theta)$, and 
the QFI $\mathcal{F}_{Q|\rho_S}(\theta)$ as $\mathcal{F}_{Q|\rho_S}^{U/O}(\theta)$, where $U$ denotes unitary encoding and $O$ denotes open encoding.

For a particular type of encoding the QFI can be obtained by measuring the encoded probe, $\rho_S{(\theta)}$, in the eigenbasis of a special operator, known as the symmetric logarithmic derivative (SLD) operator. Let us consider $\mathcal{L}$ is the SLD operator corresponding to a state $\rho_S{(\theta)}$. Then the following equation holds,
\begin{equation}
    \pdv{\rho_S{(\theta)}}{\theta}=\frac{1}{2}[\mathcal{L}\rho_S{(\theta)}+\rho_S{(\theta)}\mathcal{L}].
\end{equation}
The quantum fisher information information(QFI) in terms of encoded probe state $\rho_S{(\theta)}$ and SLD operator is given by,
\begin{equation}
  \mathcal{F}_{Q|\rho_S}^{U/O}(\theta)=\Tr{[\rho_S{(\theta)}\mathcal{L}^2]}\nonumber.
\end{equation}
For initial pure probe state,  $\ket{\psi_0}$ and unitary encoding of the form $ U_{\theta}=\exp(-\iota  H\theta )$. The QFI can be expressed as,
\begin{equation}
 \mathcal{F}_{Q|\rho_S}^{U}(\theta) = 4\left[ \langle \dot{\psi}_S(\theta) | \dot{\psi}_S(\theta) \rangle - |\langle \dot{\psi}_S(\theta) | \psi_S(\theta) \rangle|^2 \right]\nonumber.
\end{equation} Here, $\ket{\psi_S(\theta)}$ is the encoded state given as $\ket{\psi_S(\theta)}=U_{\theta}\ket{\psi_S}$. Then the QFI in this scenario just becomes, $\mathcal{F}_{Q|\rho_S}^{U}(\theta)=4 \Delta^2H$. With $\Delta^2H$ being the variance of the Hamiltonian $H$ with respect to the initial state $\rho_S=\ketbra{\psi_S}{\psi_S}$.

 Here we emphasize that the traditional QFI is obtained by restricting the optimization to the set of POVM measurement settings. However, the measurement settings can be further general by including NPOVM measurement settings as well. Therefore, a general quantum measurement setting should include both POVM and NPOVM measurement settings. In this work, we consider such general measurement settings and optimize $\mathcal{F}_{\rho_S}(\theta)^{U/O}$ over this broader class. Consequently, the generalized QFI for a given $\rho$ is defined as
\begin{equation}
    \mathcal{F}_{G|\rho_S}^{U/O}(\theta) := \max_{X \in \Lambda^G} \mathcal{F}_{X|\rho_S}^{U/O}(\theta),
\end{equation}
where ``$X|\rho_S$", ``$G|\rho_S$" in the subscript denotes QFI for a particular measurement setting X and generalized QFI, respectively for a given initial state $\rho_S$. $\Lambda^G$ denotes the set of all possible measurement settings, i.e., POVM $\cup$ NPOVM. Thus in general we have $\mathcal{F}_{G|\rho_S}^{U/O}(\theta)\geq \mathcal{F}_{Q|\rho_S}^{U/O}(\theta)$, and hence the {Cramér-Rao bound} can now be written as  
\begin{equation*}
     \Delta\theta \geq\frac{1}{\sqrt{\mathcal{F}_{\rho_S}^{U/O}(\theta)}}\geq\frac{1}{\sqrt{\mathcal{F}_{Q|\rho_S}^{U/O}(\theta)}}\geq\frac{1}{\sqrt{\mathcal{F}_{G|\rho_S}^{U/O}(\theta)}}.
\end{equation*}
Note that one can further optimize both the QFI and the generalized QFI over all possible choices of the initial probe state $\rho_S$. This yields the optimal QFI, $\mathcal{F}_Q^{U/O}(\theta)$, and the optimal generalized QFI, $\mathcal{F}_G^{U/O}(\theta)$, defined as
\begin{equation}
    \mathcal{F}_{Q}^{U/O}(\theta) := \max_{X \in \Lambda^P,\, \rho_S \in \chi_{d_S}} \mathcal{F}_{X|\rho_S}^{U/O}(\theta),
    \label{ofi}
\end{equation}
\begin{equation}
    \mathcal{F}_{G}^{U/O}(\theta) := \max_{X \in \Lambda^G,\, \rho_S \in \chi_{d_S}} \mathcal{F}_{X|\rho_S}^{U/O}(\theta).
    \label{ogfi}
\end{equation}
where $\chi_{d_S}$ denotes the set of all quantum states defined on the Hilbert space $\mathcal{H}^d$. Naturally, one then has $\mathcal{F}_{G}^{U/O}(\theta) \geq \mathcal{F}_{Q}^{U/O}(\theta)$.

In Fig.~\ref{fig3x}, we provide a schematic illustrating the implementation of POVMs and general quantum operations for parameter estimation in both types of encoding: unitary and open. In the figure, the initial probe state $\rho_S$ is represented by a green dot, the auxiliary system $\rho_A $ by a red dot, the yellow boxes denote unitary operations, the blue boxes denote encoding operations, and the red boxes represent measurements, whereas the one with the navy blue dot represents the environment $\rho_E$ considered for the open encoding. The left panel of Fig.~\ref{fig3x} depicts the case of POVM decoding. Here, the states $\rho_S$ and $\rho_E$ are initially uncorrelated. In the upper half, $\mathcal{U}$ denotes the encoding unitary, while in the lower half, $\mathcal{X}$ represents the open encoding, which intrinsically involves a global unitary operation on the probe $\rho_S$ and the environment $\rho_E$, followed by tracing out the environment. In both cases, the decoding stage applies a projective measurement on the joint state of the auxiliary and the probe, with the global unitary $U_M$ determining the measurement basis. Therefore, to optimize the QFI, one must optimize over all choices of $\rho_A$, $\rho_S$, and $U_M$. The right panel demonstrates the general quantum decoding strategy, with the upper half corresponding to unitary encoding and the lower half to open encoding. Here, a global unitary $U$ applied at the initial stage entangles $\rho_A$ and $\rho_S$. The encoding operations $\mathcal{U}$ and $\mathcal{X}$ serve the same role as in the left panel. At the decoding stage, $U_M$  determines the choice measurement basis. The essential difference between the POVM decoding (left panel) and the general decoding strategy (right panel) is the initial global unitary $U$ in the latter, which allows the possibility of the auxiliary and the probe to be initially correlated. Consequently, to optimize the generalized QFI, one must optimize over the choices of $\rho_A$, $\rho_S$, $U$, and $U_M$.

Having discussed both measurement strategies for parameter estimation, we now move on to analyze the two types of encoding strategies separately. In particular, for each encoding strategy, we investigate whether the general measurement strategy yields more Fisher information than the POVM strategy. Such a result would imply that non-positivity in the measurement strategy can be beneficial in the context of parameter estimation. For the unitary encoding strategy, however, we show that this is not the case, and the POVM strategy suffices to yield the maximum possible QFI. A detailed discussion is presented below.  

 \section{POVM decoding sufficient to attain maximum precision for unitary encoding}\label{3}

In this section, we investigate whether there are any advantages of using non-positive measurement strategies in the realm of parameter estimation for {arbitrary} unitary encoding processes. In that regard we provide our first theorem, that proves that non-positive measurement strategies offers no benefit over the POVM ones for arbitrary unitary encoding strategies. The theorem is as follows.

\begin{theorem}
For any unitary encoding process $U$, the optimal QFI $\mathcal{F}_Q^U(\theta)$, optimized over all POVM measurement strategies and input probe state, is equal to the optimal generalized QFI $\mathcal{F}_G^U(\theta)$, optimized over all general quantum measurement strategies, and input probe states. In other words, 
\begin{equation*}
    \mathcal{F}_Q^U(\theta) = \mathcal{F}_G^U(\theta).
\end{equation*}
Hence, for unitary encodings, non-POVM measurement strategies provide no advantage over POVMs.
\end{theorem}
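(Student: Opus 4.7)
The plan is to establish the reverse inequality $\mathcal{F}_Q^U(\theta)\geq\mathcal{F}_G^U(\theta)$, since $\mathcal{F}_G^U(\theta)\geq\mathcal{F}_Q^U(\theta)$ is immediate from $\Lambda^P\subseteq\Lambda^G$. I would do this by showing both sides equal $[\lambda_{\max}(G(\theta))-\lambda_{\min}(G(\theta))]^2$, where $G(\theta)=i(\partial_\theta U_\theta)U_\theta^\dagger$ is the Hermitian instantaneous generator of the encoding $U_\theta$ acting on the probe; for the standard form $U_\theta=e^{-iH\theta}$ this reduces to $G=H$.

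For the general side, fix any initial joint state $\sigma_{SA}$ realizable as $U(\rho_S\otimes\rho_A)U^\dagger$ in the protocol, and let $\sigma_{SA}(\theta)=(U_\theta\otimes\mathbb{I}_A)\sigma_{SA}(U_\theta^\dagger\otimes\mathbb{I}_A)$. Using the standard fact that the classical Fisher information from any projective measurement is upper bounded by the quantum Fisher information of the measured state, the outcome-distribution Fisher information is at most $\Tr[\sigma_{SA}(\theta)\mathcal{L}_{SA}^2]$. Extending the pure-state identity $\mathcal{F}_{Q|\rho_S}^U(\theta)=4\Delta^2 H$ quoted in the preliminaries to the joint system with generator $G(\theta)\otimes\mathbb{I}_A$, and invoking convexity of the QFI together with concavity of the variance for mixed $\sigma_{SA}$, I would obtain
\begin{equation*}
\Tr[\sigma_{SA}(\theta)\mathcal{L}_{SA}^2]\;\leq\;4\Delta^2_{\sigma_{SA}}(G(\theta)\otimes\mathbb{I}_A)\;\leq\;[\lambda_{\max}(G(\theta))-\lambda_{\min}(G(\theta))]^2,
\end{equation*}
where the final step uses that the spectrum of $G(\theta)\otimes\mathbb{I}_A$ is just that of $G(\theta)$ with each eigenvalue repeated $d_A$ times, together with the standard bound $\Delta^2 O\leq[\lambda_{\max}(O)-\lambda_{\min}(O)]^2/4$ for any Hermitian $O$. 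The same upper bound is saturated on the POVM side by choosing as the pure probe input the equal superposition of the extremal eigenstates of $G(\theta)$, tensored with a trivial (or any product) auxiliary, and measuring the encoded probe in the eigenbasis of its SLD --- a projective measurement on $S$ alone and hence a legitimate POVM. Combining the two directions gives $\mathcal{F}_G^U(\theta)\leq[\lambda_{\max}(G(\theta))-\lambda_{\min}(G(\theta))]^2\leq\mathcal{F}_Q^U(\theta)$, proving the claimed equality.

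The main obstacle I anticipate is the mixed-state step in the central display: establishing $\Tr[\sigma_{SA}(\theta)\mathcal{L}_{SA}^2]\leq 4\Delta^2_{\sigma_{SA}}(G(\theta)\otimes\mathbb{I}_A)$ for arbitrary correlated, possibly mixed $\sigma_{SA}$. The convexity-plus-concavity route works, but a cleaner alternative is to note that the general-measurement protocol admits an auxiliary of arbitrary dimension, so any mixed $\sigma_{SA}$ can be replaced by a purification on an enlarged auxiliary without reducing the achievable Fisher information; the argument then reduces to the pure-state case, where the bound holds with equality, and the spectral observation closes the proof.
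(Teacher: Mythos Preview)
Your proposal is correct and follows essentially the same route as the paper: reduce to pure inputs by convexity (the paper) or purification (your alternative), identify the QFI with $4\Delta^2$ of the generator, and observe that tensoring with $\mathbb{I}_A$ leaves the spectrum (hence the achievable variance) unchanged. The only cosmetic differences are that you close the argument by sandwiching both quantities at the explicit value $[\lambda_{\max}(G)-\lambda_{\min}(G)]^2$, whereas the paper simply notes that the two maximizations over $\{\mathcal{P}_i\}$ are formally identical; and you work with the correct instantaneous generator $G(\theta)=i(\partial_\theta U_\theta)U_\theta^\dagger$, which is a cleaner choice than the paper's $\dot H(\theta)$ for general $\theta$-dependent Hamiltonians.
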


\begin{proof}
Let the initial input system probe $\rho_S$, of dimension $d_S$, evolve under the action of a unitary of the form 
$U=\exp(-\iota H(\theta))$, 
where $H(\theta)$ is an arbitrary Hermitian operator that depends on the parameter $\theta$ to be estimated. 
The eigenvalues of $H(\theta)$ are assumed to be dimensionless. 
After the evolution, the encoded state is given by 
$\rho_S(\theta)=\exp(-\iota H(\theta))\,\rho_S\exp(\iota H(\theta))$. 
The goal is to obtain the optimal QFI, $\mathcal{F}_Q^U(\theta)$, and the generalized QFI, $\mathcal{F}_G^U(\theta)$, optimized over all choices of measurement settings in $\Lambda^P$ and $\Lambda^G$, respectively, as well as over all possible initial states $\rho_S$.

We first consider the case of obtaining $\mathcal{F}_Q^U(\theta)$. 
The convexity property of QFI~\cite{con_1,con_2,CQFI,con_3} implies that, among all possible input states, pure states maximize the QFI. 
Therefore, $\mathcal{F}_Q^U(\theta)$ can be redefined as
\begin{equation}
    \mathcal{F}_Q^U(\theta) = \max_{X \in \Lambda^G,\, \rho_S \in \zeta_d} \mathcal{F}_{X|\rho_S}^{U}(\theta),
\end{equation}
where $\zeta_{d_S} \subset \chi_{d_S}$ denotes the set of all pure states of the form $\rho_S = \ketbra{\psi_S}{\psi_S}$ of dimension $d_S$.

Thus, considering only pure entangled states $\rho_S = \ketbra{\psi_S}{\psi_S}$, the final encoded state becomes $\ket{\psi_S(\theta)} = U\ket{\psi_S} = \exp(-\iota 
H(\theta))\ket{\psi_S}$.
A POVM measurement is performed on $\ketbra{\psi_S}{\psi_S}$ to infer the parameter $\theta$. Recall that, in general, implementing a POVM requires an additional auxiliary system.
However, the QFI corresponding to the encoded state under a POVM measurement strategy can be attained by performing a projective measurement in the eigenbasis of the SLD operator, as described in Sec.~\ref{2}, and this does not depend on the auxiliary system.
Therefore, the {QFI} for the unitarily encoded state $\ket{\psi_S(\theta)}$ is given by
\begin{equation}\label{66}
    \mathcal{F}_{Q|\rho_S}^U(\theta) = 4\left[ \langle \dot{\psi_S}(\theta) | \dot{\psi_S}(\theta) \rangle - |\langle \dot{\psi_S}(\theta) | \psi_S(\theta) \rangle|^2 \right],
\end{equation}
where
\begin{eqnarray*}
\begin{split}
   \ket{\dot{\psi_S}{(\theta})}&=\frac{d\ket{\psi_S{(\theta)}}}{d\theta}=-\iota\exp(-\iota H(\theta))\,{\dot{H}(\theta)}\ket{\psi_S},\\
  {\dot{H}(\theta)}&=\frac{d{H({\theta}})}{d\theta}.
  \end{split}
\end{eqnarray*}
Substituting $\ket{\psi_S(\theta)}=\exp(-\iota H(\theta))\ket{\psi_S}$ into Eq.~\eqref{66}, we obtain
\begin{eqnarray}
\mathcal{F}_{{Q|\rho_S}}^{U}\nonumber(\theta)&=&4\left[\bra{\psi_S}{\dot{H}(\theta)}^{2}\ket{\psi_S}-|\bra{\psi_S}{\dot{H}(\theta)}\ket{\psi_S}|^2\right]\\
    &=&4\Delta^2{\dot{H}(\theta)}.
\end{eqnarray}
Here $\Delta^2{\dot{H}(\theta)}$ denotes the variance of ${\dot{H}(\theta)}$ with respect to the initial state $\ket{\psi_S}$. Since $H(\theta)$ is Hermitian, ${\dot{H}(\theta)}$ is also Hermitian. Therefore, the set of normalized eigenvectors $\{\ket{i}\}$ of ${\dot{H}(\theta)}$, with $i=0,1,\ldots, d_S-1$, forms a complete orthonormal basis. Thus, one can write the initial state in terms of the basis vectors $\{\ket{i}\}$ as  
\begin{equation*}
\ket{\psi_S}=\sum_i{\omega_i}\ket{i}.
\end{equation*}
If $\{\epsilon_i\}$ denotes the set of eigenvalues of ${\dot{H}(\theta)}$ corresponding to the set of eigenvectors $\{\ket{i}\}$, then the QFI in terms of $\omega_i$ and $\epsilon_i$ can be written as
\begin{eqnarray}
\begin{split}
\mathcal{F}_{Q|\rho_S}^U(\theta)&=4\Bigg[\sum_i\mathcal{P}_i\epsilon^2_i-\Big(\sum_i\mathcal{P}_i\epsilon_i\Big)^2\Bigg]\\ 
&=4\Bigg[\sum_i|\omega_i|^2\epsilon^2_i-\Big(\sum_i|\omega_i|^2\epsilon_i\Big)^2\Bigg],
\end{split}
\end{eqnarray}
where $\mathcal{P}_i=|\omega_i|^2 \geq 0$ denotes the probability of $\ket{\psi_S}$ being in an eigenstate $\ket{i}$. To deduce the ultimate optimal QFI $\mathcal{F}_Q^U(\theta)$, one has to maximize $\mathcal{F}_{Q|\rho_S}^U(\theta)$ over all choices of the initial states $\ket{\psi_S}$, which is equivalent to maximizing over the choice of the set $\{\mathcal{P}_i\}$ such that $\sum_i\mathcal{P}_i=1$. Thus we have
\small
\begin{eqnarray}\label{p}
\mathcal{F}_{Q}^U(\theta)&=&\max_{\{\mathcal{P}_i\}\,|\,\sum_i\mathcal{P}_i=1}\Bigg[4\Big(\sum_i\mathcal{P}_i\epsilon^2_i-\Big(\sum_i\mathcal{P}_i\epsilon_i\Big)^2\Big)\Bigg].\quad
\end{eqnarray}
\normalsize
Now let us move on to the case of estimating the parameter $\theta$ using general quantum measurements. The discussion on general quantum measurement schemes in Sec.~\ref{premea} suggests that in this case, the probe state after encoding and before the measurement step may be correlated with an external auxiliary system $A$, of dimension $d_A$. In this context, during the encoding process, we can think of $U({\theta}) \otimes \mathbbm{I}_{d_A}$ as a higher-dimensional unitary acting on a composite system $\rho_{SA}$ of the system and auxiliary, where $\rho_{SA}$ may be correlated. Therefore, $\rho_{SA}$ can be regarded as an effective probe state undergoing the unitary encoding process $U \otimes \mathbbm{I}_{d_A}$. Consequently, in such a scenario, the generalized QFI corresponding to a given initial state $\tilde{\rho}_S=\Tr_A[\rho_{SA}]$ under the action of the unitary $U$ is equivalent to the QFI for the system–auxiliary joint state $\rho_{SA}$ under the action of the unitary $U \otimes \mathbbm{I}_{d_A}$. Thus, we have
$$ \mathcal{F}_{G|\tilde{\rho}_S}^U(\theta)=\mathcal{F}_{Q|\rho_{SA}}^{U\otimes \mathbb{I}_{d_{A}}}(\theta).$$
Correspondingly, using the convexity property of QFI, the ultimate generalized QFI maximized over all choices of input states of the probe can be written as  
$$ 
\mathcal{F}_{G}^U(\theta)=\mathcal{F}_{Q}^{U\otimes \mathbb{I}_{d_{A}}}(\theta)=  
\max_{\tilde{X} \in \tilde{\Lambda}^Q, \,\rho_{SA} \in \tilde{\zeta}_d} 
\Big(\mathcal{F}_{\tilde{X}|\rho_{SA}}^{U\otimes \mathbb{I}_{d_{A}}}(\theta)\Big), 
$$
where $\tilde{X}$ denotes a POVM on the composite system of the probe and the auxiliary, and $\tilde{\zeta}_d$ is the set of all pure states defined in the joint Hilbert space of the system and the auxiliary. Considering pure initial states of the form $\rho_{SA}=\ketbra{\psi_{SA}}{\psi_{SA}}$, we know from the discussion in Sec.~\ref{2} that the optimal POVM measurement strategy that yields the QFI is the one involving projective measurement in the SLD basis of $\rho_{SA}=\ketbra{\psi_{SA}}{\psi_{SA}}$. Thus, we have  
\begin{equation}\label{81}
     \mathcal{F}_{Q|\rho_{SA}}^{U\otimes \mathbb{I}_{d_{A}}}(\theta)
     =4\left[ \langle \dot{\psi}_{SA}(\theta) | \dot{\psi}_{SA}(\theta) \rangle 
     - \big|\langle \dot{\psi}_{SA}(\theta) | \psi_{SA}(\theta)\rangle\big|^2 \right],
\end{equation}
where
\begin{equation}
   \ket{\dot{\psi}_{SA}(\theta)}=-\iota\,\exp\!\big(-\iota H(\theta) \otimes \mathbb{I}_A\big)\,\dot{H}\otimes\mathbb{I}_A(\theta)\ket{\psi_{SA}}.
\end{equation}

The expressions $\langle\dot{\psi}_{SA}(\theta)|\dot{\psi}_{SA}(\theta)\rangle$ and $|\langle \dot{\psi}_{SA}(\theta) | \psi(\theta)_{SA} \rangle|^2$ can be written in terms of the operator $H(\theta)$ and its derivative ${\dot{H}(\theta)}$ as $\bra{\psi_{SA}}{\dot{H}(\theta)}^{2}\ket{\psi_{SA}}$ and $|\bra{\psi_{SA}}{\dot{H}(\theta)}\ket{\psi_{SA}}|^2$, respectively. Substituting this into the Eq.~\eqref{81}, we obtain
\begin{eqnarray}\label{9}
\mathcal{F}(\theta)_{G|\tilde{\rho_{S}}}^{U}\nonumber&=&4[\bra{\psi_{SA}}({\dot{H}^2(\theta)}\otimes \mathbbm{I}_{d_A})\ket{\psi_{SA}}\\\nonumber
 &-&|\bra{\psi_{SA}}{\dot{H}(\theta)}\otimes\mathbbm{I}_{d_A}\ket{\psi_{SA}}|^2]\\
    &=&4\Delta^2({\dot{H}(\theta)}\otimes\mathbbm{I}_{d_A}).
\end{eqnarray}
Here $\Delta^2({\dot{H}(\theta)}\otimes \mathbbm{I}_{d_A})$ denotes the variance of ${\dot{H}(\theta)}\otimes\mathbbm{I}_{d_A}$, with respect to the state $\ket{\psi_{SA}}$. Since ${\dot{H}(\theta)}$ is a Hermitian operator, the operator ${\dot{H}(\theta)} \otimes \mathbbm{I}_{d_A}$ is also Hermitian and shares the same set of eigenvalues $\{\epsilon_i\}$ as ${\dot{H}(\theta)}$, but each with $d_A$-fold of degeneracy. We can therefore expand the initial state $\ket{\psi_{PA}}$ in the orthonormal eigenbasis of ${\dot{H}(\theta)} \otimes \mathbbm{I}_{d_A}$, denoted by ${\ket{ij}}$, where $i=0,1,\ldots,d_S-1$ and $j=0,1,\ldots,d_A-1$. Thus we have $\ket{\psi_{SA}} = \sum_{i,j}{\omega_{ij}}\ket{ij}$. Note that each eigenvector $\ket{ij}$ have the same eigenvalue $\epsilon_i$, for all $j=0,1,\ldots d_A-1$. The quantity $\mathcal{P}_{ij}=|{\omega_{ij}}|^2$ denotes the probability of $\ket{\psi_{SA}}$ being in a particular eigenstate $\ket{ij}$  where $0\leq\mathcal{P}_{ij}\leq1$ and $\sum_{ij}\mathcal{P}_{ij}=1$. 
Thus the total probability of obtaining a particular eigenvalue $\epsilon_i$ is given as $\mathcal{P}_i=\sum_j\mathcal{P}_{ij}$. Therefore in terms of $\mathcal{P}_i$ the generalized QFI corresponding to the state $\rho_{SA}=\ketbra{\psi_{SA}}{\psi_{SA}}$ can be written as

\begin{eqnarray}
 \mathcal{F}(\theta)_{G|\tilde{\rho_{S}}}^{U}\nonumber(\theta)&=&4\Big[\sum_{i}\mathcal{P}_{i}\epsilon^2_{i}-(\mathcal{P}_{i}\epsilon_{i})^2\Big].
\end{eqnarray}

Now, the optimal generalized QFI maximized over all possible initial states $\ket{\psi_{SA}}$ and therefore over all choice of the set $\{\mathcal{P}_i\}$, satisfying the normalization condition $\sum_i\mathcal{P}_i=1$ is given as

\begin{eqnarray}\label{N}
\mathcal{F}_G^U(\theta)&=&\max_{\mathcal{P}_{i}|\sum_i\mathcal{P}_i=1}4\Big[\sum_{i}\mathcal{P}_{i}\epsilon^2_{i}-(\mathcal{P}_{i}\epsilon_{i})^2\Big].
\end{eqnarray}

Note that the above equation has exactly the same form as the optimal QFI corresponding to the POVM measurement setting for the unitary encoding $U$, as given in Eq.~\eqref{p}. In other words, both $\mathcal{F}_{Q|\rho_S}^U(\theta)$ and $\mathcal{F}_{G|\tilde{\rho}_S}^U(\theta)$ have the same functional form and are optimized over the same set of parameters $\{\mathcal{P}_i\}$. Consequently, one can conclude that for an arbitrary unitary encoding $U$, one always has 
$
\mathcal{F}_{Q}^U(\theta)=\mathcal{F}_{G}^U(\theta).
$ 
This completes the proof of Theorem~1.

Below, we present a remark that follows directly from Theorem~1.
\end{proof}
\textbf{Remark.} 
Theorem~1 suggests that, for unitary encoding, POVM operations suffice to obtain the optimal precision, even when the maximization is performed over all general quantum measurements and choice of input states. This highlights that NPOVM operations cannot increase the precision of parameter estimation in the case of unitary encoding. Moreover, NPOVM measurements are generally more costly than POVMs, since they require an initial correlation (a resource) between the probe and an external auxiliary. Therefore, for unitary encoding, one can avoid the additional cost of implementing NPOVMs and instead restrict to the less costly POVM measurement strategy while still achieving the optimal precision.


This completes our analysis for the case of arbitrary unitary encoding. In the next section, we consider open encoding and examine whether maximizing the QFI over general measurement settings and all possible input states can enhance the precision of parameter estimation in this case.

{\section{Positive vs non-positive measurements for open encoding}\label{4}}
{In the previous section, we analyzed the unitary encoding process in parameter estimation, where the probe remains ideally isolated from the environment and showed that POVM operation is enough to provide best precision for a suitable choice of initial probe state. In this section, we focus on the open encoding process, where the encoding invariably involves an interaction between the system and an external environment. This is modeled by the action of a global unitary on the system and the external followed by tracing out the environment.}

{Let $\rho_S$ and  $\rho_E$ denote the initial probe and external environment states acting on the Hilbert spaces $\mathcal{H}^S$ and  $\mathcal{H}^E$, respectively. The open encoding channel, $\Lambda_{\theta}$, is then a completely positive trace-preserving (CPTP) map that transforms $\rho_S$ into $\rho_S(\theta)$, in the following way}
\begin{equation*}
\Lambda_{\theta}(\rho_S) = \Tr_E\left[\mathcal{X}(\rho_S \otimes \rho_E)\mathcal{X}^\dagger\right] = \rho_S(\theta).
\end{equation*}
{Here the encoded parameter $\theta$ is a dimensionless quantity that is intrinsic to the global unitary $\mathcal{X}$. 


{In order to estimate $\theta$, one must perform a measurement on the encoded state, $\rho_S(\theta)$. As discussed in Sec.~\ref{premea} and Sec.~\ref{para}, such a measurement can either be restricted to POVMs or extended to more general quantum measurements. Accordingly, one obtains the optimal QFI, $\mathcal{F}_Q^O$, defined in Eq.~\eqref{ofi}, or the optimal generalized QFI, $\mathcal{F}_G^O$, defined in Eq.~\eqref{ogfi}, respectively.} {This naturally raises the question of whether, as in the case of unitary encoding, POVM measurements suffice to achieve the ultimate optimal precision for all types of open encoding, or whether there exist instances for which $\mathcal{F}_G^O > \mathcal{F}_Q^O$, implying that in those cases non-POVM schemes would yield better precision.}

{In this regard, considering a probe and environment with dimensions $d_S = 2$ and $d_E = 2$, respectively, we establish a sufficient condition, presented as a theorem in Sec.~\ref{case1}, that specifies when {POVM operations achieve the same level of precision as general operations.} In Sec. \ref{new_sec}, we provide few examples of open encoding channels for which POVM operations achieve the maximum precision. Furthermore, in Sec.~\ref{4Bx}, we identify scenarios in which NPOVM operations outperform standard POVMs in terms of optimal precision in estimation.}

{\subsection{When POVMs suffice 
in case of open encoding}\label{case1}}

{As depicted in the lower panels of Fig.~\ref{fig3x}, the open encoding strategy requires a global unitary operation $\mathcal{X}$, defined on the joint Hilbert space $\mathcal{H}^S \otimes \mathcal{H}^E$, to be implemented on the probe–environment composite system, followed by tracing out the environment. Here we consider both the probe and the environment to be two-dimensional systems (qubits). The auxiliary system, however, is described by a Hilbert-space, $\mathcal{H}_A$, of an arbitrary dimension, $d_A$. Since in $\mathbb{C}^2 \otimes \mathbb{C}^2$ any unitary operation can be decomposed in terms of tensor products of single-qubit operators $\sigma_i$ ($i=0,1,2,3$), with $\sigma_0 = \mathbb{I}_2$ and $\sigma_i$ for $i=1,2,3$ being the Pauli matrices, one can write
\begin{equation*}
\mathcal{X}=\sum_{ij} h(\theta)_{ij}\, \sigma_i \otimes \sigma_j ,
\end{equation*}
where the parameter $\theta$ to be estimated is an intrinsic part of the unitary $\mathcal{X}$, and each coefficient in the expansion of $\mathcal{X}$ is an implicit function of $\theta$.} 

{The nature of the encoding, and hence the form of $\mathcal{X}$ and the initial environment state $\rho_E$, is taken to be the same for both decoding strategies (POVM or general quantum operations). However, as shown in Fig.~\ref{fig3x}, the only difference between the two strategies lies in the initial system-auxiliary state and final measurement, i.e., in the POVM case, the initial system-auxiliary state, $\rho_S\otimes\rho_A$, must be product whereas in case of general measurement it can be any state, $\rho_{SE}$ of dimensionn $d_S\otimes d_A$. Additionally, the final measurement basis can also be different in the two strategies defined by $U_M$ and $U_M'$ for general and positive measurements respectively, as depicted in Fig.~\ref{fig3x}. Within this setup, we present below the theorem that provides a sufficient condition for the precision attainable by a general quantum operations to be exactly reachable by a suitably chosen POVM for a given open encoding.

\begin{theorem}
{For an open encoding, by acting a unitary on the probe-environment state $\rho_S\otimes\rho_E$, the POVM-decoding strategy with a particular measurement setting and a probe-auxiliary initial state of the form 
$
\rho_S \otimes \rho_A = \tilde{U}_1 \otimes \tilde{U}_2 \ketbra{\alpha_S \alpha_A}{\alpha_S \alpha_A} \tilde{U}_1^\dagger \otimes \tilde{U}_2^\dagger,
$
where $\ket{\alpha_S \alpha_A}$ is any fixed pure product state of the Hilbert space $\mathcal{H}^S \otimes \mathcal{H}^A$, yields the same precision as that obtained by the general quantum measurement strategy with a fixed choice of measurement setting and an initial (possibly correlated) state 
$\rho_{SA} = U \ketbra{\alpha_S \alpha_A}{\alpha_S \alpha_A} U^\dagger$,
if there exist an $i$ for which
$$
   U \, (\mathbbm{I} \otimes \tilde{U}_2 \sigma_i \sigma_k \tilde{U}_2^{\dagger}) 
   = (\mathbbm{I} \otimes \sigma_i \sigma_k) \, U, 
   \qquad \forall k.
$$
Here, $\tilde{U}_1$ $\tilde{U}_2$, and $U$ are unitary operators which act on the probe, auxiliary, and probe–auxiliary system, respectively.}
\end{theorem}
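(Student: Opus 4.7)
The plan is to show that the hypothesised commutation relation forces the preparation unitary $U$ in the general-measurement scheme to factorise as $U=V_S\otimes\tilde{U}_2^\dagger$ for some unitary $V_S$ on the probe, so that the apparently correlated state $U\ket{\alpha_S\alpha_A}$ is secretly a product. Setting $\tilde{U}_1=V_S$ in the POVM scheme then produces the same reduced encoded probe state $\rho_S(\theta)$, and since the auxiliary carries only a $\theta$-independent pure state in both schemes, any Fisher information obtained from the given general measurement can be reproduced by a suitable POVM-scheme measurement.

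The factorisation of $U$ is carried out in three steps. First, I would rewrite the hypothesis as a commutator: defining $\Omega:=U(\mathbb{I}_S\otimes\tilde{U}_2)$, the identity $U(\mathbb{I}\otimes\tilde{U}_2\sigma_i\sigma_k\tilde{U}_2^\dagger)=(\mathbb{I}\otimes\sigma_i\sigma_k)\,U$ is equivalent to $[\Omega,\,\mathbb{I}_S\otimes\sigma_i\sigma_k]=0$ for every $k$, at the specific $i$ supplied by the hypothesis. Second, for any fixed $i\in\{0,1,2,3\}$, the four operators $\{\sigma_i\sigma_k\}_{k=0,1,2,3}$ span the full matrix algebra $\mathcal{B}(\mathbb{C}^2)$ of the auxiliary qubit, since $\sigma_i\sigma_0=\sigma_i$, $\sigma_i^2=\mathbb{I}$, and $\sigma_i\sigma_j=\pm i\sigma_l$ for the two remaining indices. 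Third, by the standard commutant identity for bipartite operator algebras, an operator on $\mathcal{H}_S\otimes\mathbb{C}^2$ commuting with every $\mathbb{I}_S\otimes M$ must be of the form $V_S\otimes\mathbb{I}_A$; hence $\Omega=V_S\otimes\mathbb{I}_A$ for some unitary $V_S$, which rearranges to $U=V_S\otimes\tilde{U}_2^\dagger$.

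With this factorisation in hand, the general-scheme initial state collapses to $\rho_{SA}=V_S\ketbra{\alpha_S}{\alpha_S}V_S^\dagger\otimes\tilde{U}_2^\dagger\ketbra{\alpha_A}{\alpha_A}\tilde{U}_2$, and since $\mathcal{X}$ acts only on the probe-environment block, the encoded joint state reduces to $\rho_S(\theta)\otimes\rho_A^{\mathrm{gen}}$ with $\rho_S(\theta)=\Tr_E[\mathcal{X}(V_S\ketbra{\alpha_S}{\alpha_S}V_S^\dagger\otimes\rho_E)\mathcal{X}^\dagger]$. Taking $\tilde{U}_1=V_S$ in the POVM scheme yields precisely the same $\rho_S(\theta)$ tensored with the pure auxiliary state $\tilde{U}_2\ketbra{\alpha_A}{\alpha_A}\tilde{U}_2^\dagger$. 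Any fixed general measurement $\{\Pi_i\}$ on the product encoded state induces an effective POVM $E_i=\Tr_A[\Pi_i(\mathbb{I}_S\otimes\rho_A^{\mathrm{gen}})]$ on $\rho_S(\theta)$ alone, and because $\rho_A^{\mathrm{POVM}}$ is also a pure state, one can implement the same effective POVM on $\rho_S(\theta)$ by a suitable joint projective measurement in the POVM scheme (a routine Naimark dilation). The outcome statistics, and therefore the Fisher information, coincide.

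The main obstacle I anticipate is the careful treatment of the case $d_A>2$: the hypothesis is phrased in terms of qubit Pauli operators on the auxiliary, so one must first interpret $\sigma_i\sigma_k$ as the natural embedding of these operators on a fixed two-dimensional subspace of $\mathcal{H}_A$, whereupon the commutant argument gives only a block structure on $\Omega$ rather than the clean product form. A further check then needs to show that this weaker factorisation still matches the two Fisher informations, most likely by exploiting the freedom in choosing the initial pure auxiliary $\ket{\alpha_A}$ to lie inside the relevant qubit block. A secondary subtlety, already flagged above, is the explicit construction of the replicating POVM-scheme measurement; while conceptually straightforward, it should be spelled out explicitly to confirm that the ``particular measurement setting'' promised by the theorem is truly implementable within the POVM protocol of Fig.~\ref{fig3x}.
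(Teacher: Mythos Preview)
Your argument is correct, but it proceeds along a genuinely different line from the paper's. The paper does not extract any structural consequence from the hypothesis; instead it expands the encoding unitary $\mathcal{X}=\sum_{ij}h(\theta)_{ij}\sigma_i\otimes\sigma_j$ in the Pauli basis, writes out the rotated post-encoding probe--auxiliary states $\rho''_{SA}$ (general scheme, after the measurement rotation $U_M$) and $\rho'''_{SA}$ (POVM scheme, after some $U'_M$) as explicit double sums over the Pauli indices $i,k$, fixes $U'_M$ so that one family of terms matches, and then observes that the displayed commutation relation is exactly what is needed for all remaining terms to match as well. Equality of the two states then gives equality of the outcome distributions and hence of the Fisher information.

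Your commutant argument is cleaner and yields the sharper conclusion that the hypothesis, read literally for all $k\in\{0,1,2,3\}$, forces $U=V_S\otimes\tilde U_2^{\dagger}$ to factorise --- so the ``general'' preparation was never genuinely correlated, a structural fact the paper's term-matching does not surface. What the paper's approach buys in exchange is an explicit construction of the replicating measurement $U'_M$, and --- as Remark~3 and the $XX$-interaction example make clear --- it really only requires the hypothesis for those indices $k$ that actually occur in the Pauli decomposition of $\mathcal{X}$. In that weaker form the set $\{\sigma_i\sigma_k\}_k$ need not span $\mathcal{B}(\mathbb{C}^2)$, your commutant step no longer forces $U$ to be a product, and the brute-force term-by-term matching becomes the essential mechanism rather than a detour.
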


\begin{proof}
{Let us first consider the case of general measurements. The initial system-auxiliary state in this scenario is $\rho_{SA}=U \ketbra{\alpha_S \alpha_A}{\alpha_S \alpha_A} U^\dagger$ (as specified in the statement of the theorem). 
Another environment is introduced for the open encoding which is initially in a state $\rho_E$ of dimension $d_E=2$ and can be expressed in its spectral decomposition as $\rho_E=\sum_{i=1}^2r_i\ket{\mathcal{E}_i}\bra{\mathcal{E}_i}$. Hence, the composite probe-auxiliary-environment state is $\rho_{SA}\otimes\rho_E=\sum_{j}r_{j}U\ket{\alpha_S \alpha_A}\bra{\alpha_S \alpha_A}U^\dagger\otimes\ket{\mathcal{E}_j}\bra{\mathcal{E}_j}$. Since the global unitary $\mathcal{X}$ used to encode a parameter $\theta$, acts on $2\times 2$ systems, it can be decomposed in terms of Pauli matrices and two-dimensional identity matrix as, $\mathcal{X}=\sum_{ij}h(\theta)_{ij}\sigma_i \otimes \sigma_j$. Hence the final system-auxiliary-environment state, after encoding is given by 
\begin{widetext}
{
\begin{eqnarray}\nonumber
\rho{'}_{SAE}&=&\nonumber
\sum_{i,j,k,l}\big[h(\theta)_{ij} h(\theta)^*_{kl}r_1 \mathbbm{I}\otimes \sigma_iU\ket{\alpha_S \alpha_A}\bra{\alpha_S \alpha_A}U^{\dag} \mathbbm{I}\otimes \sigma_k\sigma_j\ket{\mathcal{E}_1}\bra{\mathcal{E}_1}\sigma_l\\\nonumber
&+&h(\theta)_{ij} h(\theta)^*_{kl}r_2 \mathbbm{I}\otimes \sigma_iU\ket{\alpha_S \alpha_A}\bra{\alpha_S \alpha_A}U^{\dag}\mathbbm{I}\otimes \sigma_k\sigma_j\ket{\mathcal{E}_2}\bra{\mathcal{E}_1}\sigma_l\big].
\end{eqnarray}}
\end{widetext}
{The final state of probe-auxiliary system after tracing out the external environment is}
{\begin{eqnarray}\nonumber
\rho{'}_{SA}&=&r_1\sum_{ik}A_{ik}\mathbbm{I}\otimes \sigma_iU\ket{\alpha_S \alpha_A}\bra{\alpha_S \alpha_A}U^{\dag}\mathbbm{I}\otimes \sigma_k\\ \nonumber
&+&r_2\sum_{ik}B_{ik}\mathbbm{I}\otimes \sigma_iU\ket{\alpha_S \alpha_A}\bra{\alpha_S \alpha_A}U^{\dag}\mathbbm{I}\otimes \sigma_k \nonumber.
\end{eqnarray}}
To perform the measurement on in an arbitrary but fixed directionn one can first rotate the state, $\rho_{SA}$, using an apropriate unitary, say $U_M$, and project the final state on the computational basis. Hence the system-auxiliary state, just before its projection on the computational basis will be 
{\begin{eqnarray}\nonumber
&&\rho{''}_{SA}\nonumber\\&=&r_1\sum_{ik}A_{ik}U_M\mathbbm{I}\otimes \sigma_iU\ket{\alpha_S \alpha_A}\bra{\alpha_S \alpha_A}U^\dagger\mathbbm{I}\otimes \sigma_k U_M^{\dag}\nonumber\\ \nonumber
&+&r_2\sum_{ik}B_{ik}U_M\mathbbm{I}\otimes \sigma_iU \ket{\alpha_S \alpha_A}\bra{\alpha_S \alpha_A}U^\dagger\mathbbm{I}\otimes \sigma_kU_M^{\dag} \nonumber.
\end{eqnarray}.}
{On the other hand, to implement the POVM, it is sufficient to initially consider $U=\tilde{U_1}\otimes\tilde{U_2}$, i.e., a local unitary. The final projective measurement on the probe-auxiliary state can again be performed by rotating the encoded prob-auxiliary state in a suitable direction, say $U'_M$, and projecting it on the computational basis. Thus, in this case the state after the rotation and before the projection would be}
\begin{eqnarray}\nonumber
\rho{'''}_{SA}&=&r_1\sum_{ik}A_{ik}\mathcal{A}_{ik}
+r_2\sum_{ik}B_{ik}\mathcal{B}_{ik}.
\end{eqnarray}
Here the operator $\mathcal{A}_{ik}$ and $\mathcal{B}_{ik}$ are given by $\mathcal{A}_{ik}=U'_M\mathbbm{I}\otimes \sigma_i(\tilde{U_1}\otimes\tilde{U_2})\ket{\alpha_S \alpha_A}\bra{\alpha_S \alpha_A}(\tilde{U}_1^{\dag}\otimes\tilde{U}_2^{\dag})\mathbbm{I}\otimes \sigma_k U_M'^{\dag}$ and $\mathcal{B}_{ik}=U'_M\mathbbm{I}\otimes \sigma_i(\tilde{U_1}\otimes\tilde{U_2}) \ket{\alpha_S \alpha_A}\bra{\alpha_S \alpha_A}(\tilde{U}_1^{\dag}\otimes\tilde{U}_2^{\dag})\mathbbm{I}\otimes \sigma_kU_M'^{\dag}$ respectively.
{A sufficient condition for the Fisher information corresponding to POVM and non-POVM to be equal is the states, $\rho''_{PA}$ and $\rho'''_{PA}$, are itself equal. In order to make the states equal, we can first make few terms of them equal by tuning $U_M'$ so that it satisfy the following condition}
{\begin{equation}\label{14}
 U_M(\mathbbm{I}\otimes \sigma_i)U=U'_M(\mathbbm{I}\otimes \sigma_i)(\tilde{U_1}\otimes\tilde{U_2}),
 \end{equation}
 for a fixed $i$. This makes $U_M'$ fixed as 
 \begin{equation}
 U'_M=U_M(\mathbbm{I}\otimes \sigma_i)U(\tilde{U_1}\otimes\tilde{U_2})(\mathbbm{I}\otimes \sigma_i).\label{myeq1}   
\end{equation}}
{One can easily check that the states $\rho''_{PA}$ and $\rho'''_{PA}$ can entirely be equal if the following condition is satisfied}
\begin{equation}\label{15}
    {U'_M}\left(\mathbbm{I}\otimes \sigma_k\right)(\tilde{U_1}\otimes\tilde{U_2})=U_M\left(\mathbbm{I}\otimes \sigma_k\right)U,~\forall k.
\end{equation}
{Using Eq.~\eqref{myeq1} and Eq.~\eqref{15}, we get the ultimate sufficient condition which is given by}
\begin{equation}\label{suff1}
    U\left(\mathbbm{I}\otimes \tilde{U}_2\sigma_i\sigma_k\tilde{U}_2^{\dag}\right)=\left(\mathbbm{I}\otimes \sigma_i\sigma_k\right)U, ~\forall k.
\end{equation}
{This is the sufficient condition of equality of Fisher information achieved by POVM and any general measurement.}}
\end{proof}
\begin{remark}
{Note that though we have provided the sufficient condition considering pure initial states (aprior to encoding) the same sufficient condition will also be valid if the initial states, $\rho_S\otimes\rho_A$ and $\rho_{SA}$, are mixed but have the same set of eigenvalues.
}
\end{remark}
\begin{remark}
{ If $U$ is the optimal unitary corresponding to the general operation, then Eq.~\eqref{suff1} provides the sufficient condition that POVM operation is enough for reaching best possible precision among all general measurements, i.e., the QFI for general measurements. }
\end{remark}
\begin{remark}
{Since we decomposed the global unitary involved in encoding in terms of tensor products of Pauli and identity matrices, i.e., $\{\sigma_i\otimes\sigma_k\}_{ik}$, the final sufficient condition depends on those matrices. If instead of $\{\sigma_i\otimes\sigma_k\}_{ik}$, the decomposition of the global unitary is available in terms of a set of any other local operators, say $\{\tau_i\otimes\omega_k\}_{ik}$, similar sufficient conditions can still be found following the same mathematical logic. The condition in such case will be $    U\mathbbm{I}\otimes \tilde{U}_2\omega_i\omega_k\tilde{U}_2^{\dag}=\mathbbm{I}\otimes \omega_i\omega_k U, \forall k.$ Hence the condition remains the same even when one moves to higher dimensional auxiliary and prob systems instead of being restricted to qubits and replaces $\{\sigma_i\otimes\sigma_k\}_{ik}$ with the higher dimensional set of local operators that are involved in the decomposition of the global unitary.
}
\end{remark}

\begin{figure}
		\centering

            \includegraphics[width=8.0cm]{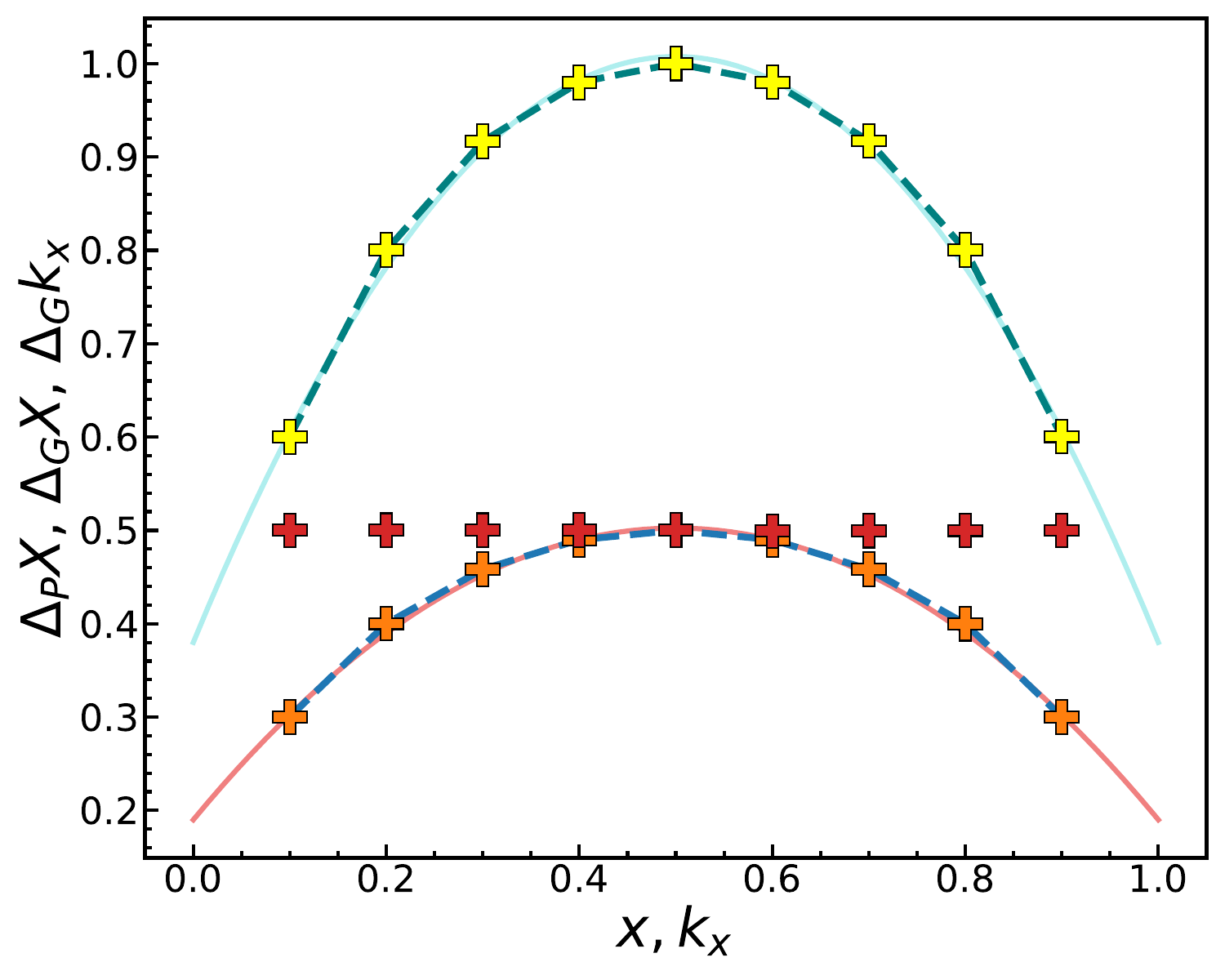}
\caption{\textbf {POVM providing the best precision.} Here the minimum error in the parameter estimation problem by optimal general measurement is shown by yellow, orange, and red pulse markers corresponding to dephasing, amplitude damping, and $XX$ interaction. On the other hand, the green and blue dashed lines represent the minimum estimation error achievable using POVM measurements corresponding to dephasing and amplitude damping noise. The fitted curves corresponding to dephasing and amplitude damping channels are given by the cyan and orange curves. }
		\label{fig3xp} 
		\end{figure} 
\subsection{Examples of open encoding for which POVMs are optimum for estimation}
\label{new_sec}
In this part, we analyze a few specific examples of open encoding and compare the precision achieved through POVM and general operations. First let us consider two open encoding channels, viz. amplitude damping and dephasing. Action of a CPTP map, $\mathcal{M}$, on a state, $\rho$, can be written in terms of Kraus operators, $\{K_i\}_i$, as $\mathcal{M}(\rho)=\sum_i K_i \rho K_i^\dagger$. The operations of amplitude damping and dephasing channels are not exceptions. In particular, for amplitude damping channel, the Kraus operators can be taken as $K_0=\ket{1}_S\bra{1}+\sqrt{1-k_a}\ket{0}_S\bra{0}$ and $K_0=\sqrt{k_a}\ket{1}_S\bra{0}$. 
The Kraus operators corresponding to dephasing channel can be considered to be $K_0=\sqrt{k_d}\mathbbm{I}_{2}$, $K_1=\sqrt{k_d}\ket{0}_S\bra{1}$ and $K_2=\sqrt{k_d}\ket{1}_S\bra{0}$. Here, $k_a$ and $k_d$ are the noise strengths of amplitude damping and dephasing channels respectively which we aim to estimate.
Along the vertical axis of Fig.~\ref{fig3xp}, we plot the maximum precisions, $\Delta_PX$ and $\Delta_\mathcal{G}X$, in estimating the noise strengths encoded through amplitude damping and dephasing channels using the optimal positive measurement and optimal general measurements, respectively. The horizontal axis represents the actual value of the estimated parameter, $X$. Here $X$ denotes the two parameters under consideration, i.e., $X\in\{k_a,k_d\}$. The blue and green dashed lines represent $\Delta_PX$ for estimation of the noise strength $k_a$ and $k_d$, respectively, using POVM operations. The behaviors of $\Delta_\mathcal{G}X$ are illustrated using orange and yellow stars for estimation of, respectively, $k_a$ and $k_d$. One can notice from the plots, $\Delta_PX\approx \Delta_\mathcal{G}X$, for all considered parameter values. Hence in these cases, the POVMs are sufficient to gain optimum precision and one need not to employ resource-intensive NPOVM measurements. As can be noticed from the plots, the behavior of $\Delta_{P,\mathcal{G}}X$ with respect to $X$ is parabolic. By fitting the curves with the equations  $\Delta_{P,\mathcal{G}}k_a=A_1k_a^2-A_1k_a+A_2$ and $\Delta_{P,\mathcal{G}}k_d=D_1k_d^2-D_1k_d+D_2$ we find the exact form of the parabolas, which are described by the coefficients $A_1=-1.25$, $A_2=0.192$, $D_1=-2.51$ and $D_2=0.384$. The fitted curves corresponding to  dephasing and amplitude damping noise are represented by the cyan and orange curves, respectively.} The least squares error of the fitting curves corresponding to amplitude damping and dephasing noise are $52\%$ and $59\%$, respectively.
\begin{figure*}
		\centering
			\includegraphics[width=5.4cm]{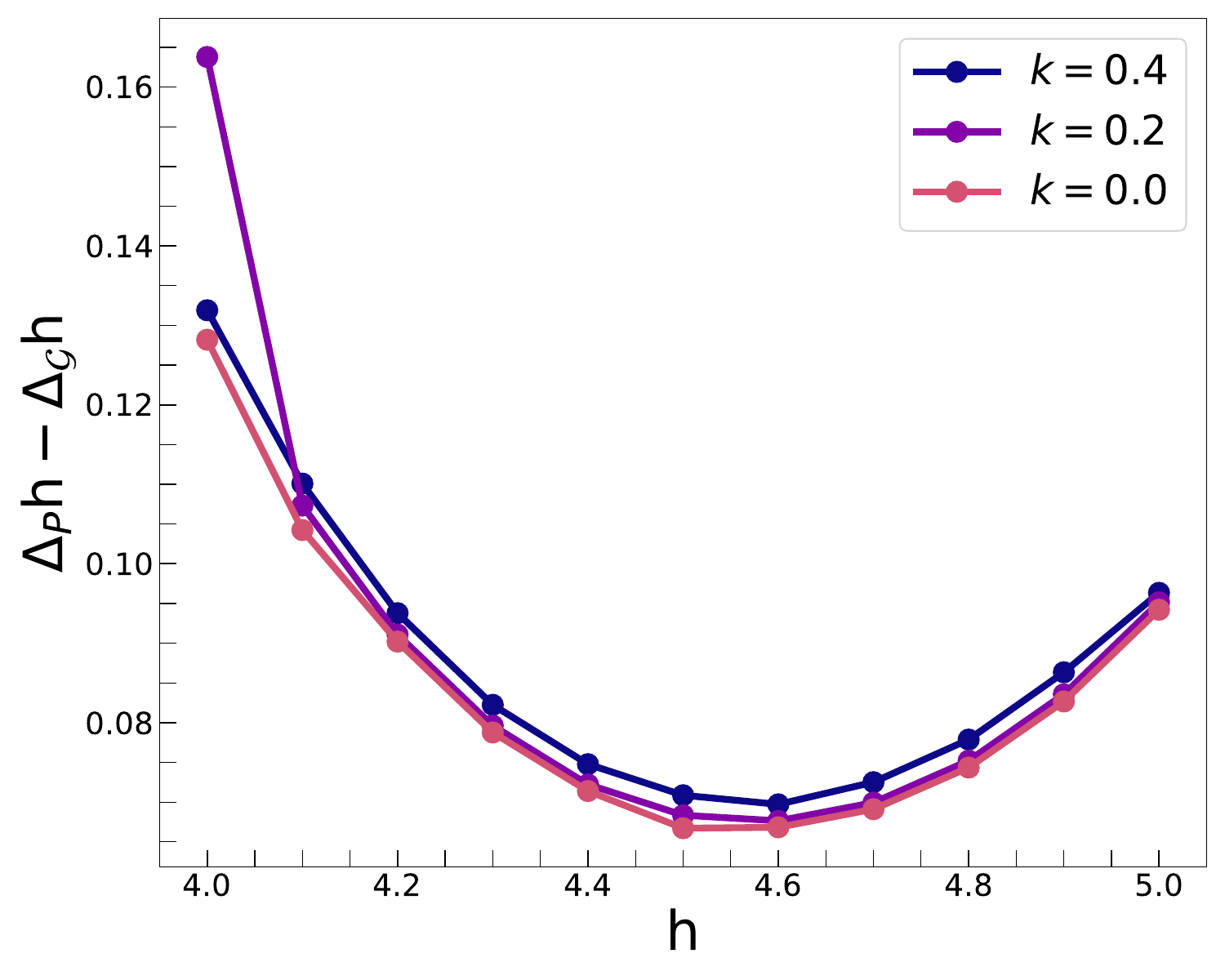}
   \includegraphics[width=5.4cm]{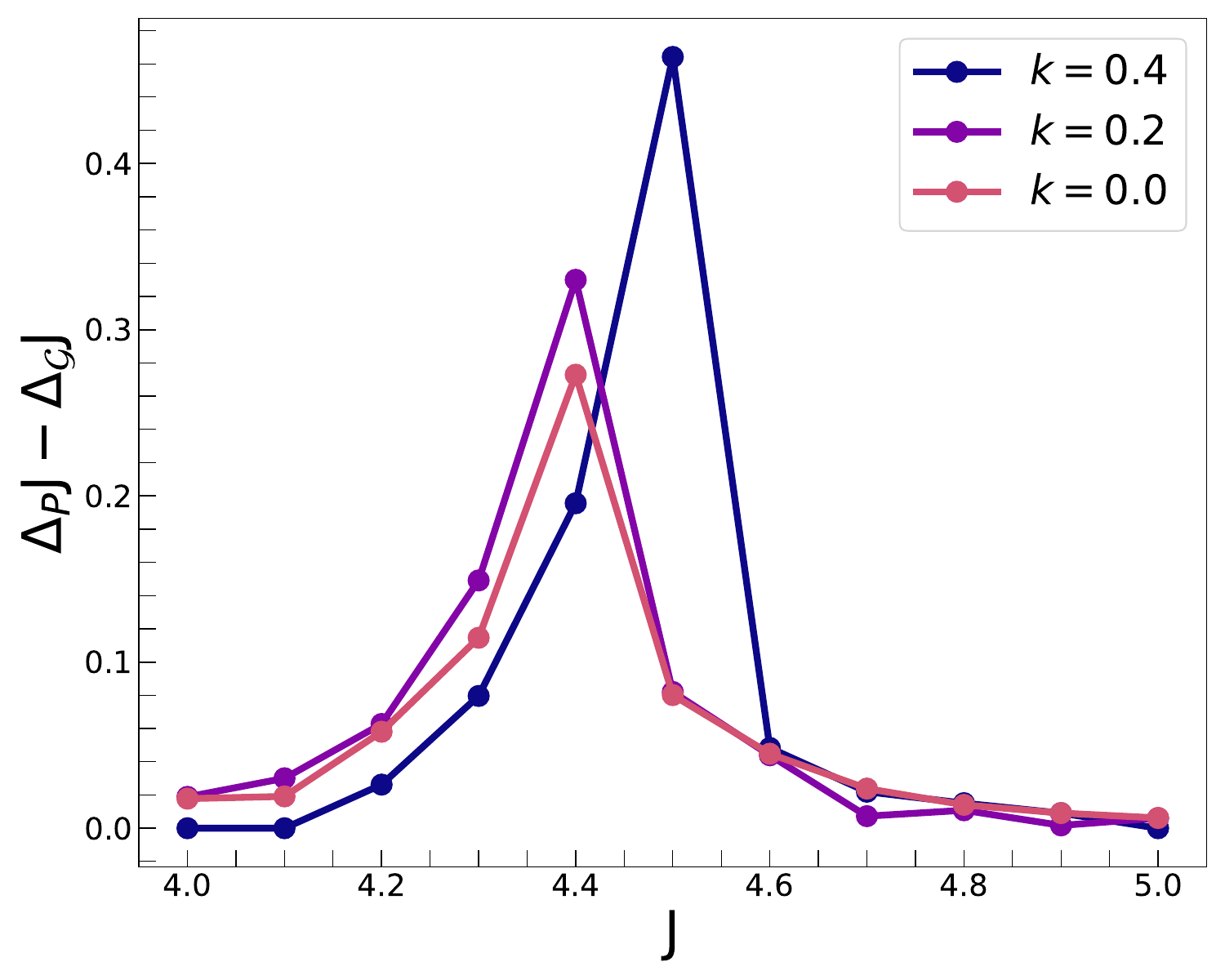}
    \includegraphics[width=5.4cm]{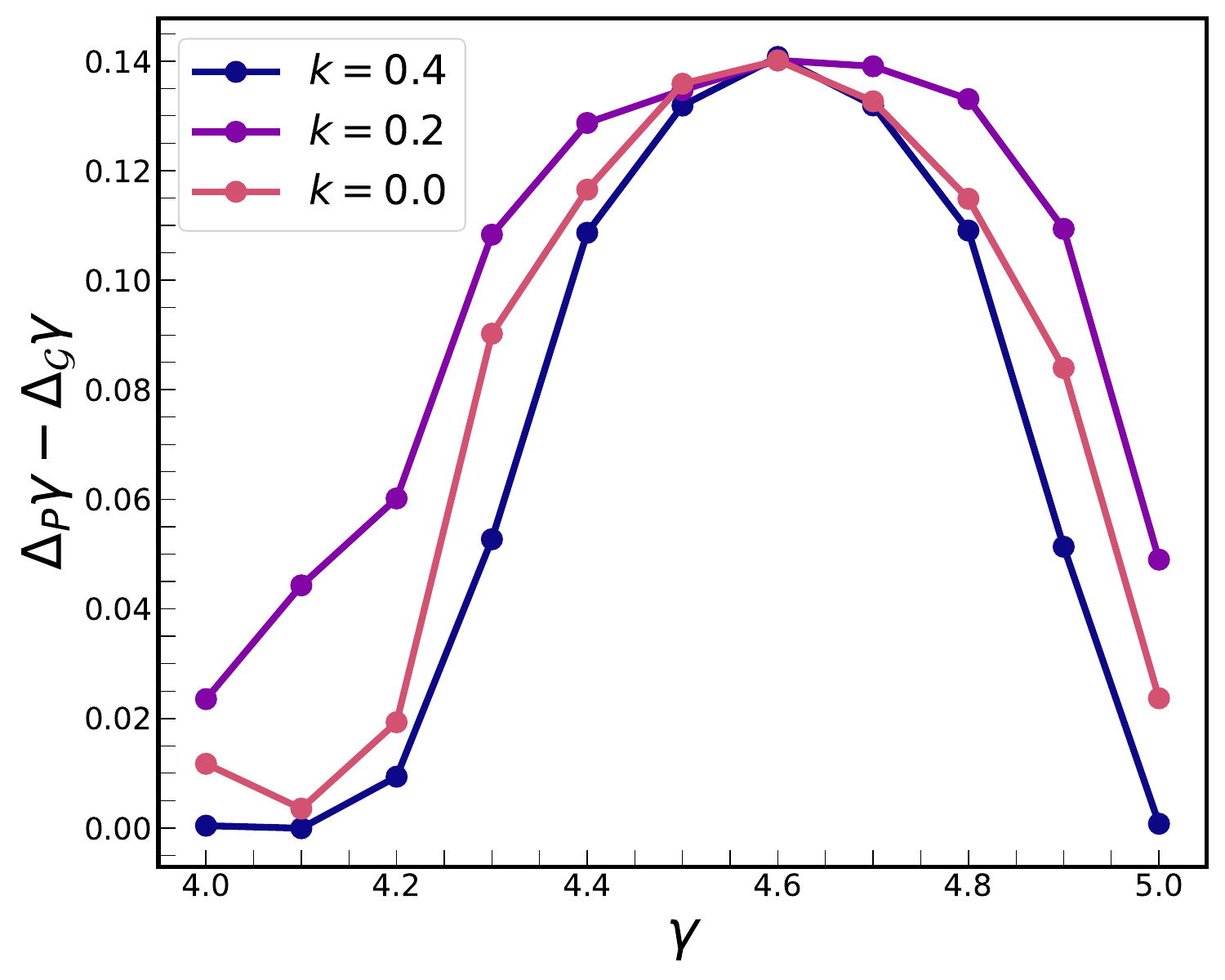}
\caption{{\textbf{Enhancement of precision using non-positive measurements.}} Here we have plotted the differences of minimal error achived by POVM and general measurement operation Vs. the estimated parameter. The left handed figure is corresponds to the field parameter $h$, where $J=\hbar/t$ and $\gamma=1$. In the middle figure the estimated parameter is $J$, where $h=\hbar/t$ and $\gamma=1$. At last the considered parameter is $\gamma$, where, we have taken $J=\hbar/t$ and $h=\hbar/t$. Here, we observe $\Delta_{P}h - \Delta_{\mathcal{G}}h$  decreases as $h$ increases and reaches a minimum value at $h=4.55\hbar/t$, and then begins to increase again. In the middle sub-figure, in the $\Delta_{P}J - \Delta_{\mathcal{G}}J$ vs. $J$ plot, $\Delta_{P}J - \Delta_{N}J$ increases with the parameter $J$, reaches a maximum at an intermediate value of $J$, and then begins to decrease. However, the peak corresponding to $J$ varies depending on the value of the initial environment state parameter $k$. Finally, in the rightmost figure, we have plotted $\Delta_{P}\gamma - \Delta_{\mathcal{G}}\gamma$ versus $\gamma$, which also shows sufficient enhancement of precision. } 
		\label{fig3}
		\end{figure*}

Let us now move to another example. Here, for open encoding we bring the external environment and switch on an XX interaction between the environment and the prob for a particular amount of time, say $t$. The interaction is governed by the Hamiltonian $H=\theta(\sigma_X \otimes \sigma_X)$, where $\theta$ is the strength of the interaction. The evolution can be described by the global unitary operator, $U_G=\exp(-\iota\theta(\sigma_x\otimes\sigma_x)t/\hbar)=\exp(-(k_x\sigma_x\otimes\sigma_x))$, where $k_x=t\theta/\hbar$. We want to estimate the parameter $k_x$, encoded in the prob through this interaction and analyze the effectiveness of POVMs in the estimation. In this regard, we first numerically find the optimum fisher information using general measurements and plot that in the vertical axis of Fig. \ref{fig3xp} with respect to $k_x$, presented in the horizontal axis, using red plus markers. The general measurement performed on the prob-auxiliary system can be realized by the application of a rotation, $U_M$, on them and then projection on the computational basis. We numerical find the optimum measurement can always be implemented through the application of the following global unitary 
\begin{equation}
U_M=\begin{bmatrix}
1 & 0 & 0 & 1\\
0 & 1 & 1 & 0\\
0 & 1 & -1 & 0\\
1&0 & 0 &-1\\
\end{bmatrix}.
\end{equation}
To check if the precision can be reached using POVMs, we make use of the theorem discussed in the previous subsection. In this regard, we decompose the unitary $U_G$ in the Pauli basis as $U_G=\sin(\theta)\mathbbm{I}_2\otimes I_2+\cos(\theta)\sigma_x \otimes \sigma_x$. Considering $\sigma_i=I_2$, $\sigma_k=\sigma_x$, $\tilde{U}_2=I_2$, and the given $U_M$, it can be easily checked that the sufficient condition provided by the theorem is being satisfied. Hence, in this case also, there exist a suitable initial state and measurement operators using which the positive measurements can provide the optimal precision.
{\subsection{NPOVM decoding can be better than POVMs}\label{4Bx}}
In Sec.~\ref{case1}, we provided the sufficient condition for the reachability of the optimum precision by POVM for a given open encoding channel, where the optimization is performed over the set of general measurements. We also presented particular examples where the optimal precision is shown to be attainable by POVM. 

In this section, we present instances when NPOVM measurement schemes can yield more precision than the traditional optimal precision achieved by restricting to only POVM measurement strategies. In this regard, we consider the initial probe system as a qubit, which is initially in the state $\rho_S$ acting on the Hilbert space $\mathcal{H}^S$. To encode a parameter on the probe system through open encoding, we consider a qubit environment that is initially available in the state $\rho_E=\frac{1+k}{2}\ket{0}\bra{0}+\frac{1-k}{2}\ket{1}\bra{1}$ acting on the Hilbert space $\mathcal{H}^E$. {Then a global unitary $U_{SA}=e^{-\iota H_{SE}t/\hbar}$ corresponding to the Hilbert space $\mathcal{H}^S\otimes \mathcal{H}^E$ acts on the probe and environment, where $H_{SE}$ is a two-qubit XY model~\cite{XY}, given as
\begin{eqnarray}\label{xop1}\nonumber
H_{SE}&=&h(\sigma_z\otimes\mathbbm{I}_2+ \mathbbm{I}_2\otimes\sigma_z)\\&+&\frac{J}{2}((1+\gamma)\sigma_x\otimes\sigma_x+(1-\gamma)\sigma_y\otimes\sigma_y).
\end{eqnarray}
After the global evolution governed by the unitary $U_{SE}$, the environment is discarded. Therefore the final encoded state is given by $\rho^e_S=\Tr_E[U_{SE} (\rho_S \otimes \rho_E) U^{\dag}_{SE}]$. {In the next step to perform the measurement, we bring an auxiliary qubit system and perform a projective measurement on the composite initial state of the encoded probe and auxiliary system. For POVM, the auxiliary system is kept in a product state with the encoded probe system before performing the measurement, whereas no such restriction is imposed in the case of the general measurement scheme. Now to maximize the precision by both POVM and general measurement, we maximize it over the measurement setting and the initial probe state. POVM is optimized by optimizing over the initial auxiliary state and the projective measurement basis that acts on the composite system of probe and auxiliary system. On the other hand, general measurement is optimization over the auxiliary and the entanglement contained between the probe and the projective measurement basis that acts on the joint system of the probe and the auxiliary system. 


The Hamiltonian in Eq.~\eqref{xop1} has three parameters: the field strength ($h$), interaction strength ($J$), and the anisotropic parameter ($\gamma$). In Fig.~\ref{fig3}, we plot the difference between the minimum error in estimation achieved by positive and the general measurements {corresponding to the parameters $h, J,$ and $\gamma$ vs. the corresponding parameters. We perform the analysis for three different values of $k$, i.e., $k=0.0$, $k=0.2$, and $k=0.4$. The difference between the minimum error achieved by POVM and general measurement reflects the difference of maximum precision achieved by POVM and general measurement. In the first situation, we aim to estimate the parameter $h$, where we set $\gamma=1$ and $J=\hbar/t$, as shown in the leftmost panel of Fig.~\ref{fig3}. Here, we observe that the gap between the minimum error of estimating $h$ using POVM and general operations, i.e., $\Delta_{P}h - \Delta_{\mathcal{G}}h$, decreases as $h$ increases. It reaches a minimum value at $h=4.55\hbar/t$, and then begins to increase again, for all considered values of $k$. However, throughout the given range of $h$, i.e., $[4.0,5.0]$ in $\hbar/t$ unit, $\Delta_{P}h - \Delta_{\mathcal{G}}h>0$. It implies that the NPOVM, which lies outside the set of all POVMs, plays the role of reducing the error. On the other hand, in the middle panel of the figure, we plot $\Delta_{P}J - \Delta_{\mathcal{G}}J$ with respect to $J$. In this case, we observe $\Delta_{P}J - \Delta_{\mathcal{G}}J$ increases with increasing $J$, reaching a maximum at an intermediate value of $J$, and then begins to decrease. However, the $J$ values at which the difference reaches the peak vary with $k$. Finally, in the rightmost panel, we plot $\Delta_{P}\gamma - \Delta_{\mathcal{G}}\gamma$ with respect to $\gamma$, which again shows a reduction of error in estimation of $\gamma$ by NPOVM operations. Here it is interesting to note that for the leftmost and middle panels corresponding to the parameters $h$ and $J$, as we increase $k$ from 0 to 0.4 with a 0.2 interval, the enhancement in precision through NPOVMs increases. These observations indicate the usefulness of NPOVM operations over POVM operations in quantum metrology.
\section{conclusion}
\label{5}
The potential advantages of non-positive quantum measurements have been demonstrated in various areas of quantum technologies. Inspired by these advances, in this study we turned our attention to quantum metrology and asked, can non-positive measurements also enhance precision in parameter estimation? To date, quantum metrology has relied exclusively on POVM-based measurement strategies. This naturally motivated the broader question of whether extending the framework to include general measurements, embracing both positive and non-positive measurements, can unlock new metrological advantages. Addressing this question not only enriches the theoretical landscape of quantum measurements but also positions NPOVM operations as a promising resource for advancing quantum metrology, thereby strengthening their role in the wider domain of quantum technologies.

To this end, we examined both unitary and open encoding strategies within the quantum metrological setting. For arbitrary unitary encodings, we proved analytically that POVMs alone are sufficient to attain the ultimate precision, even when compared against the more general class of measurements that include NPOVMs. Thus, in this setting, NPOVMs offer no metrological advantage.

We demonstrated a different situation in the open-encoding regime. In this case, two possibilities arose: in some instances, POVMs already reached the optimal precision achievable by any general measurement; in others, non-positive measurements proved strictly more powerful. We identified a sufficient condition that characterized when POVMs were guaranteed to be optimal and confirmed this result numerically through representative examples, including parameter estimation tasks involving bit-flip and dephasing channels, as well as $XX$-type interactions between the environment and an auxiliary system.

On the other hand, we depict some instances where NPOVMs demonstrated a clear advantage. This was illustrated through the transverse-field XY (TXY) model, where we studied the estimation of the field strength, interaction strength, and anisotropy parameter. In each of these scenarios, NPOVM-based measurements outperformed their POVM counterparts, highlighting their potential as a resource for enhanced quantum metrology.
\acknowledgements
KS acknowledges support from the project MadQ-CM (Madrid Quantum de la Comunidad de Madrid) funded by the European Union (NextGenerationEU, PRTRC17.I1) and by the Comunidad de Madrid (Programa de Acciones Complementarias).
\bibliography{metrology}
\end{document}